\def\BibTeX{{\rm B\kern-.05em{\sc i\kern-.025em b}\kern-.08em
		T\kern-.1667em\lower.7ex\hbox{E}\kern-.125emX}}
\newcommand{\mv}[1]{\mbox{\boldmath{$ #1 $}}}
\newtheorem{proposition}{Proposition}
\newtheorem{remark}{Remark}
\def\endthebibliography{%
	\def\@noitemerr{\@latex@warning{Empty `thebibliography' environment}}%
	\endlist
}
\begin{document}
	
	\title{MIMO Integrated Sensing and Communication with Extended Target: CRB-Rate Tradeoff \\
	}
	\author{\IEEEauthorblockN{Haocheng~Hua\IEEEauthorrefmark{1}, Xianxin~Song\IEEEauthorrefmark{1}, Yuan Fang\IEEEauthorrefmark{1},
			Tony Xiao~Han\IEEEauthorrefmark{2}, and
			Jie~Xu\IEEEauthorrefmark{1}
		}
		\IEEEauthorblockA{\IEEEauthorrefmark{1}School of Science and Engineering and Future Network of Intelligence Institute,\\ The Chinese University of Hong Kong (Shenzhen), Shenzhen, China}
		\IEEEauthorblockA{\IEEEauthorrefmark{2}The 2012 Lab, Huawei, Shenzhen, China \\
			Email: haochenghua@link.cuhk.edu.cn,~xianxinsong@link.cuhk.edu.cn,~fangyuan@cuhk.edu.cn,\\~tony.hanxiao@huawei.com,~xujie@cuhk.edu.cn}
			\thanks{Jie Xu is the corresponding author.}
	}
	
	\maketitle
	
\begin{abstract}
	This paper studies a multiple-input multiple-output (MIMO) integrated sensing and communication (ISAC) system, in which a multi-antenna base station (BS) sends unified wireless signals to estimate an extended target and communicate with a multi-antenna communication user (CU) at the same time. We investigate the fundamental tradeoff between the estimation Cram\'er-Rao bound (CRB) for sensing and the data rate for communication, by characterizing the Pareto boundary of the achievable CRB-rate (C-R) region. Towards this end, we formulate a new MIMO rate maximization problem by optimizing the transmit covariance matrix at the BS, subject to a new form of maximum CRB constraint together with a maximum transmit power constraint. We derive the optimal transmit covariance solution in a semi-closed form, by first implementing the singular-value decomposition (SVD) to diagonalize the communication channel and then properly allocating the transmit power over these subchannels for communication and other orthogonal subchannels (if any) for dedicated sensing. It is shown that the optimal transmit covariance is of full rank, which unifies the conventional rate maximization design with water-filling power allocation and the CRB minimization design with isotropic transmission. Numerical results are provided to validate the performance achieved by our proposed optimal design, in comparison with other benchmark schemes.
\end{abstract}


\section{Introduction}
\label{sec:intro}
Recently, integrated sensing and communication (ISAC) has been recognized as a candidate technology towards future sixth-generation (6G) cellular networks to enable various environment-aware intelligent applications (see, e.g., \cite{liu2022integrated} and the references therein), in which wireless signals and cellular infrastructures are reused for both sensing and communication functionalities. Motivated by the great success of multi-antenna or multiple-input multiple-output (MIMO) techniques in wireless communications \cite{telatar1999capacity,Tse2005book} and radar sensing \cite{li2007mimo,stoica2007probing,haimovich2007mimo} independently, MIMO ISAC has attracted particular research interests, in which the multiple antennas can be exploited to provide spatial multiplexing and diversity gains to increase the communication rate and reliability \cite{telatar1999capacity,Tse2005book}, as well as the waveform/spatial diversity gains to enhance the sensing accuracy and resolution \cite{li2007mimo,stoica2007probing,haimovich2007mimo,song2022intelligent}. In the literature, there have been some prior works \cite{liu2018toward,zhang2018multibeam, Eldar2020joint,xu2021rate,hua2021optimal,liu2021cramer,song2022joint} studying multi-antenna ISAC designs to optimize the sensing and communication performance. However, these existing works mainly focused on practical waveform and beamforming approaches that are generally suboptimal in achieving the performance limits for sensing and/or communication.


In ISAC systems, it is essential to understand the performance tradeoffs between radar sensing and communication from detection/estimation and information theories \cite{liu2022survey}. This not only helps reveal the fundamental performance limits, but also guides practical ISAC system designs. On one hand, the Cram\'er-Rao bound (CRB) serves as a fundamental performance metric for radar estimation, which defines the variance lower bound by any unbiased estimators. On the other hand, the channel capacity acts as a fundamental performance metric for wireless communication,  which captures the rate upper bound by any practical  modulation and coding schemes. Therefore, how to characterize the fundamental CRB-rate (C-R) tradeoff for MIMO ISAC systems is becoming an important problem to be tackled. To our best knowledge, only one recent work \cite{xiong2022flowing} studied the so-called C-R region for a point-to-point MIMO ISAC system with one sensing target, which is defined as the set containing all C-R pairs that can be simultaneously achieved by sensing and communication. However, \cite{xiong2022flowing} only obtained two boundary points on the C-R region, namely the CRB-minimization and rate-maximization points, at which the CRB is minimized and the rate is maximized, respectively. Unfortunately, \cite{xiong2022flowing} failed to characterize the whole boundary of the C-R region, especially the boundary points between the above two. This thus motivates our work to fill in such a research gap.

In particular, this paper considers a point-to-point MIMO ISAC system with an extended target, in which a multi-antenna base station (BS) sends unified wireless signals to estimate an extended target from the echo and communicate with a multi-antenna communication user (CU) at the same time. We aim to reveal the fundamental C-R tradeoff of this system, by characterizing the Pareto boundary of the C-R region. The main results are listed as follows.
\begin{itemize}
	\item First, to characterize the C-R-region boundary between the CRB-minimization and rate-maximization points, we formulate a new CRB-constrained rate maximization problem, in which the data rate for MIMO communication is maximized by optimizing the transmit covariance matrix, subject to a new form of maximum CRB constraint and the maximum transmit power constraint.
	
	\item Next, we derive the optimal transmit covariance solution to the CRB-constrained rate maximization problem in a semi-closed form. Towards this end, we first implement the singular-value decomposition (SVD) to diagonalize the communication channel, and accordingly transform the transmit covariance optimization problem into an equivalent power allocation problem over these subchannels for communication and other orthogonal subchannels (if any) for dedicated sensing. Then, we obtain the optimal power allocation solution by the Lagrange duality method. It is shown that the optimal transmit covariance is of full rank, which unifies the conventional rate maximization design with water-filling power allocation and the CRB minimization design with isotropic transmission.
	
	%
	\item Finally, we present numerical results to evaluate the C-R-region boundary achieved by the optimal transmit covariance by considering two cases with rank-deficient and full-column-rank communication channels, respectively, as compared with other benchmark schemes.
	%
	%
	%
\end{itemize}


{\it Notations:} Boldface letters refer to vectors (lower  case) or matrices (upper case). For a square matrix $\mv{S}$, ${\operatorname{tr}}(\mv{S})$  denotes its trace, and $\mv{S}\succeq \mv{0}$ means that $\mv{S}$ is positive semidefinite. For an arbitrary-size matrix $\mv{M}$, $\text{det}(\bm{M})$, ${\text{rank}}(\mv{M})$, $\mv{M}^H$, and $\mv{M}^T$ denote its determinant, rank, conjugate transpose, and transpose, respectively. $\otimes$ and $\circ$ denote the Kronecker product and the Hadamard product, respectively.
$\mathbb{R}^{x\times y}$ and $\mathbb{C}^{x\times y}$ denote the spaces of real and complex matrices, respectively. {${\mathbb{E}}\{\cdot\}$} denotes the statistical expectation. $\|\mv{x}\|$ denotes the Euclidean norm of a complex vector $\mv{x}$. $|z|$ and $z^*$ denote the magnitude and the conjugate of a complex number $z$, respectively. For a real number $x$, $\left(x\right)^+ = \max(x,0)$. $\operatorname{diag}(x_1,...,x_n)$ denotes a diagonal matrix with diagonal elements $x_1,...,x_n$.

\section{System Model}\label{sec:system1}

We consider a MIMO ISAC system, in which a BS communicates with a CU and simultaneously estimates an extended target, as shown in Figs. \ref{fig:system_model}(a) and \ref{fig:system_model}(b) with monostatic and bistatic sensing, respectively. There are $M > 1$ transmit antennas at the BS transmitter (Tx), $N_s$ receive antennas at the BS receiver (Rx), and $N_c > 1$ antennas at the CU.
\begin{figure}[htb]
	\centering
	\setlength{\abovecaptionskip}{-1mm}
	\setlength{\belowcaptionskip}{-1mm}
	\includegraphics[width=3.2in]{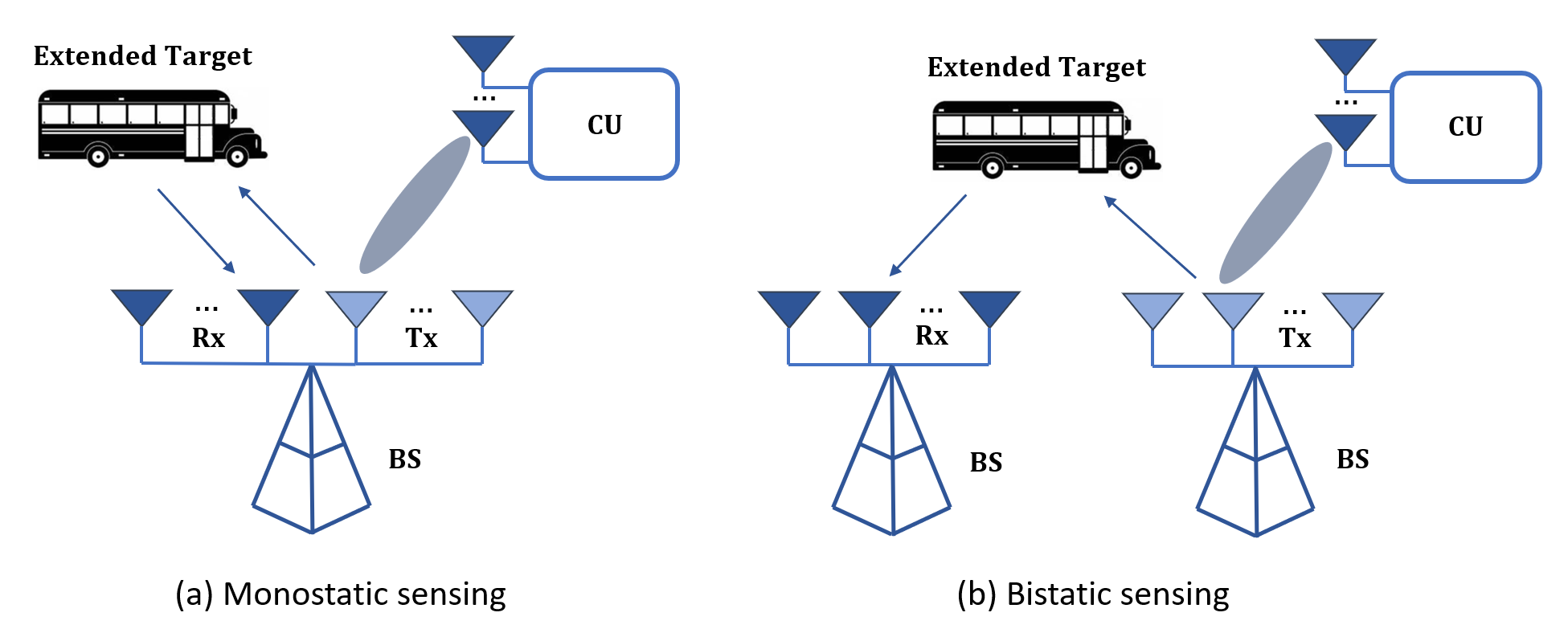}
	\caption{Illustration of the MIMO ISAC system. } \label{fig:system_model}
\end{figure}

Let $\bm{x}(n)$ denote the  unified transmit signal at symbol $n$ for both communication and sensing. It is assumed that $\bm{x}(n)$ is a circular symmetric complex Gaussian (CSCG) random vector with zero mean and covariance matrix $\bm{Q} = \mathbb{E} \{\bm{x}(n) \bm{x}^H(n)\} \succeq \bm{0}$, i.e., $\bm{x}(n) \sim \mathcal{CN}(\bm{0}, \bm{Q})$. Let $P$ denote the transmit power budget at the BS-Tx. Then we have the power constraint as
\begin{align}\label{eqn:power:constraint}
	\operatorname{tr} (\bm{Q}) = \mathbb{E}\{\|\bm{x}(n)\|^2\} \leq P.
\end{align}

First, we consider the point-to-point MIMO communication. Let $\bm{H}_c \in \mathbb{C}^{N_c \times M}$ denote the channel matrix from the BS-Tx to the CU, whose rank is denoted by $r = \text{rank}(\bm{H}_c) \leq \min(N_c,M)$. The received signal by the CU at symbol $n$ is
\begin{align}
	\bm{y}_c(n) = \bm{H}_c \bm{x}(n) + \bm{z}_c(n),
\end{align}
where $\bm{z}_c(n)$ denotes the noise at the CU receiver that is a CSCG random vector with zero  mean and covariance $\sigma_c^2 \bm{I}_{N_c}$, i.e., $\bm{z}_c(n) \sim \mathcal{CN}(\bm{0},\sigma_c^2 \bm{I}_{N_c})$. In this case, the achievable rate (in bps/Hz) of the MIMO channel with $\bm{Q}$ is
\begin{align}\label{eqn:Rate}
	R(\bm{Q}) = \log_2 \det \left(\bm{I}_{N_c} + \frac{1}{\sigma_c^2} \bm{H}_c \bm{Q} \bm{H}_c^H \right).
\end{align}
It is assumed that the channel matrix $\bm{H}_c$ is perfectly known at the BS-Tx, such that it can design $\bm{Q}$ based on $\bm{H}_c$ to optimize the achievable rate $R(\bm{Q})$.

Next, we consider the MIMO radar sensing over a particular coherent processing interval (CPI) with $L > M$ symbols or radar pulses. Let $\mathcal{L} = \{1,\ldots, L\}$ denote the set of symbols in the CPI, and $\bm{X} = \left[\bm{x}(1),...,\bm{x}(L)\right] \in \mathbb{C}^{M \times L}$ denote the transmitted signals over the CPI. Suppose that $\bm{H}_s \in  C^{N_s \times M}$ denotes the target response matrix from the BS-Tx to the target to the BS-Rx. Accordingly, the received echo signal $\bm{Y}_s \in \mathbb{C}^{N_s \times L}$ at the BS-Rx is
\begin{align}
	\bm{Y}_s = \bm{H}_s \bm{X} + \bm{Z}_s,
\end{align}
where $\bm{Z}_s \in \mathbb{C}^{N_s \times L}$ denotes the noise matrix at the BS-Rx, with each element being a CSCG random variable with zero mean and variance $\sigma_s^2$.
In particular, we consider the case with an extended target, which is modelled as the combination of a large number of $K$ distributed point-like scatterers. Suppose that the target is located at a fixed location during the CPI. In this case, $\bm{H}_s$ is expressed as \cite{liu2021cramer}
\begin{align}
	\bm{H}_s=\sum_{k=1}^{K} \alpha_{k} \bm{b}\left(\phi_{k}\right) \bm{a}^{T}\left(\theta_{k}\right),
\end{align}
where $\alpha_{k}$ denotes the reflection coefficient of the $k$-th scatterer, $\theta_{k}$ and $\phi_{k}$ denote its associated angle of departure (AoD) and angle of arrival (AoA) at the BS, and $\bm{a}\left(\theta_{k}\right)$ and $\bm{b}\left(\phi_{k}\right)$ denote the corresponding transmit and receive steering vectors, respectively. The objective of sensing is to estimate the target response matrix $\bm{H}_s$, which contains $M N_s$ complex parameters. 
In this case, the CRB matrix for estimating $\bm{H}_s$ is given by \cite{liu2021cramer}
\begin{align}\label{eqn:CRB_matrix}
	\overline{\bold{CRB}}(\bm{Q}) = \bm{J}(\bm{Q})^{-1}
\end{align}
which is a complex matrix with dimension $M N_s \times M N_s$, with the $(M(i-1)+j)$-th diagonal element representing the lower bound of variance for unbiasedly estimating the $(i,j)$-element of $\bm{H}_s$, $i \in \{1, \ldots, N_s\}, j\in\{1, \ldots, M\}$. In \eqref{eqn:CRB_matrix}, $\bm{J}(\bm{Q})$ is the Fisher information matrix given by \cite{liu2021cramer}
\begin{align}
	\bm{J}(\bm{Q})=\frac{1}{\sigma_{s}^{2}} \bm{X}^{*} \bm{X}^{T} \otimes \bm{I}_{N_{s}} \approx  \frac{L}{\sigma_{s}^{2}} \bm{Q}^T \otimes \bm{I}_{N_{s}},
\end{align}
where $\frac{1}{L} \bm{X} \bm{X}^H$ is approximated as $\bm{Q}$ by assuming that $L$ is sufficiently large \cite{liu2021cramer}.
Based on the CRB matrix $\overline{\bold{CRB}}(\bm{Q})$ in \eqref{eqn:CRB_matrix}, we use its trace as the sensing performance metric for estimating $\bm{H}_s$ \cite{kay1993fundamentals,ben2009constrained}, i.e.,
\begin{align}\label{eqn:CRB}
	\text{CRB}(\bm{Q}) = \operatorname{tr}(\overline{\bold{CRB}}(\bm{Q})) = \operatorname{tr}(\bm{J}(\bm{Q})^{-1}) = \frac{\sigma_{s}^2 N_s}{L} \operatorname{tr}(\bm{Q}^{-1}).
\end{align}
The BS-Tx can design $\bm{Q}$ to optimize $\text{CRB}(\bm{Q})$ for estimation.



\section{C-R Region Characterization}\label{section-region}

This section characterizes the C-R region to reveal the fundamental tradeoff between the data rate $R(\bm{Q})$ in \eqref{eqn:Rate} for communication and the estimation CRB $\text{CRB}(\bm{Q})$ in \eqref{eqn:CRB} for sensing. To start with, we define the C-R region, which is a set containing all C-R pairs that can be simultaneously achievable by the ISAC system under the given transmit power constraint.  Mathematically, the C-R region with power budget $P$ is defined as
\begin{align}\label{equ:C-R-region}
	\nonumber
	&\mathcal{C}_{C-R}(P)  \triangleq \{(\bar{\Gamma},\bar{R}): \bar{\Gamma}\ge\frac{\sigma_{s}^2 N_s}{L} \operatorname{tr}(\bm{Q}^{-1}), \\
	&~~\bar{ R} \leq \log_2 \det (\bm{I}_{N_c} + \frac{1}{\sigma_c^2} \bm{H}_c \bm{Q} \bm{H}_c^H), \operatorname{tr}(\bm{Q}) \leq P, \bm{Q} \succeq \bm{0}\}.
\end{align}
In this case, revealing the optimal tradeoff between communication rate $R(\bm{Q})$ and estimation CRB $\text{CRB}(\bm{Q})$ corresponds to finding the Pareto boundary of the C-R region $\mathcal{C}_{C-R}(P)$ in (\ref{equ:C-R-region}). Towards this end, we first introduce two boundary points corresponding to rate maximization and CRB minimization, respectively.


First, we maximize the achievable rate $R(\bm{Q})$ by optimizing the transmit covariance $\bm{Q}$, i.e.,
\begin{align}\label{equ:P_R}
	\max _{\bm{Q}\succeq \bm{0}}  \text{ } \log_2 \det (\bm{I}_{N_c} + \frac{1}{\sigma_c^2} \bm{H}_c \bm{Q} \bm{H}_c^H) \quad \text {s.t. }  \operatorname{tr}(\bm{Q}) \leq P.
\end{align}
Based on the SVD, we have $\bm{H}_c = \mv{U}_c \mv{\Sigma} \mv{V}_c^H$, where $\mv{U}_c \in \mathbb{C}^{N_c \times N_c}$ and $\mv{V}_c \in \mathbb{C}^{M \times M}$ with $\mv{U}_c^H \mv{U}_c =\mv{U}_c \mv{U}^H_c = \bm{I}_{N_c}$ and $\mv{V}_c^H \mv{V}_c = \mv{V}_c \mv{V}_c^H = \bm{I}_{M}$, and $\mv{\Sigma} \in \mathbb{C}^{N_c \times M}$ is an all-zero matrix except the first $r$ diagonal elements being the $r$ non-zero singular values $\lambda_1 \geq \lambda_2 \geq ... \geq \lambda_r > 0$. It has been well established in \cite{telatar1999capacity} that the optimal solution to the rate maximization problem (\ref{equ:P_R}) is given by $\bm{Q}_c^* = \mv{V}_c \bm{\Lambda} \mv{V}_c^H$, where $\bm{\Lambda} = \operatorname{diag}(p_{c,1}^*,...,p_{c,r}^*,0,...,0)$ denotes the water-filling power allocation matrix with its first $r$ diagonal elements given by
\begin{align}\label{eq:WF}
	p_{c,i}^* = \left(\nu - \frac{\sigma_c^2}{\lambda_i^2}\right)^+, \forall i \in \{1,\ldots, r\}.
\end{align}
In (\ref{eq:WF}), $\nu$ is the water level that can be obtained based on $\sum_{i=1}^r p_{c,i}^* = P$. At the obtained $\bm{Q}_c^*$, let  $R_{\text{max}} = R(\bm{Q}_c^*) =  \sum_{i=1}^{r} \log_2 (1+\frac{\lambda_i^2 p_{c,i}^*}{\sigma_c^2})$ and $\text{CRB}_C = \text{CRB}(\bm{Q}_c^*)$ denote the correspondingly achieved data rate and estimation CRB, respectively. As a result, we obtain the rate-maximization boundary point of the C-R region as $(\text{CRB}_C, R_{\text{max}})$.
%
%
%
\begin{remark} \label{remark:finite_SCRB}
	\emph{It is worth noting that if the rate-maximization transmit covariance $\bm{Q}_c^*$ is rank-deficient (i.e., $\text{rank}(\bm{Q}_c^*) < M$), then it follows from \eqref{eqn:CRB} that $\text{CRB}_C \rightarrow \infty$. This means that $\bm{H}_s$ is not estimable in this case due to the lack of degrees of freedom (DoF). Accordingly, the rate-maximization boundary point becomes $(\infty, R_{\text{max}})$. It can be verified that this case happens when the communication channel matrix $\bm{H}_c$ is rank-deficient (i.e., {$\text{rank}(\bm{H}_c) = r < M$}) or the transmit power is small (i.e., $P \le {P_0 \triangleq } \sum_{i=1}^{M-1}  (\frac{\sigma_c^2}{\lambda_M^2}-\frac{\sigma_c^2}{\lambda_i^2})$).}

	%
\end{remark}

Next, we minimize the CRB $\text{CRB}(\bm{Q})$ by optimizing the transmit covariance $\bm{Q}$, i.e.,
\begin{align}\label{equ:P_CRB}
	\min _{\bm{Q}\succeq \bm{0}}  \text{ } \frac{\sigma_{s}^2 N_s}{L} \operatorname{tr}(\bm{Q}^{-1}) \quad \text {s.t. }  \operatorname{tr}(\bm{Q}) \leq P.
\end{align}
By checking the Karush-Kuhn-Tucker (KKT) conditions, the optimal solution to problem \eqref{equ:P_CRB} is obtained as $\bm{Q}_s^* = \frac{P}{M} \bm{I}_M$ \cite{liu2021cramer}. As a result, the correspondingly achieved minimum CRB and data rate become $\text{CRB}_{\text{min}} = \frac{\sigma_{s}^2 N_s M^2}{P L}$ and $R_{S} = \sum_{i=1}^r \log_2 \left( 1+ \frac{\lambda_i^2 P}{\sigma_c^2 M} \right)$, respectively. Therefore, we obtain the CRB-minimization boundary point as $(\text{CRB}_{\text{min}},R_S)$.




Based on the obtained two boundary points $(\text{CRB}_C, R_{\text{max}})$ and $(\text{CRB}_{\text{min}},R_S)$, it now remains to find the remaining Pareto boundary points between them for characterizing the whole C-R region. To find each boundary point, we propose to maximize the achievable rate $R(\bm{Q})$ by optimizing $\bm{Q}$, subject to the maximum CRB constraint $\text{CRB}(\bm{Q}) \leq {\Gamma}$ and the transmit power constraint in \eqref{eqn:power:constraint}, where the CRB threshold $\Gamma$ is set as a constant such that $\text{CRB}_{\text{min}}\le {\Gamma} \le \text{CRB}_{C}$. By defining $\tilde{\Gamma} \triangleq \frac{L \Gamma}{\sigma_s^2 N_s}$, the CRB-constrained rate maximization problem is formulated as
\begin{subequations} \label{equ:P_1}
	\begin{align}
		\label{equ:Obj_P_1}
		\text{(P1)}: \text{ } \max _{\bm{Q}\succeq \bm{0}} & \text{ } \log_2 \det (\bm{I}_{N_c} + \frac{1}{\sigma_c^2} \bm{H}_c \bm{Q} \bm{H}_c^H) \\
		\label{equ:CRLB_P_1}
		\text { s.t. } & \operatorname{tr}(\bm{Q}^{-1}) \leq \tilde{\Gamma} \\
		\label{equ:Power_P_1}
		&\operatorname{tr}(\bm{Q}) \leq P.
	\end{align}
\end{subequations}
Note that problem (P1) is convex and thus can be solved optimally based on standard convex optimization techniques \cite{boyd2004convex}. To gain insights, we derive its optimal solution in a semi-closed form in the next section.

\begin{remark} \label{remark:P_infinite_discussion}
	\emph{It is worth discussing the C-R region in the special case when the communication channel is of full column rank ($r = M$) and the transmit power is sufficiently large (i.e., $P \to \infty$). In this case, it is easy to show that $\bm{Q}_c^* = \bm{Q}_s^* = \frac{P}{M} \bm{I}_M$ and the two boundary points become identical (i.e., $(\text{CRB}_{\text{min}},R_S) = (\text{CRB}_C, R_{\text{max}})$). {As a result}, the C-R region can be obtained as a box without solving problem (P1), which is denoted by $\mathcal{C}_{C-R}(P) = \{(\bar{\Gamma},\bar{R}): \bar{\Gamma} \ge \text{CRB}_{\text{min}},\bar{R} \le R_{\text{max}}\}$.
	}
	
\end{remark}

\section{Optimal Solution to Problem (P1)}
This section presents the optimal solution to (P1). First, recall that the SVD of $\bm{H}_c$ is $\bm{H}_c = \mv{U}_c \mv{\Sigma} \mv{V}_c^H$. By defining
\begin{align}\label{equ:Q_transform}
	\tilde{\bm{Q}} \triangleq \mv{V}_c^H \mv{Q} \mv{V}_c~{\text{or}}~{\bm{Q}} \triangleq \mv{V}_c \tilde{\mv{Q}} \mv{V}_c^H,
\end{align}
(P1) can be equivalently reformulated as
\begin{subequations}\label{equ:P1_equivalent1}
	\begin{align}
		\label{equ:P1_1_obj}
		\text{(P1.1):} \text{ } \max _{\tilde{\bm{Q}}\succeq \bm{0}}  \text{ }& \log_2 \det (\mv{I}_{M} + \frac{1}{\sigma_c^2}  \mv{\Sigma}^2 \tilde{\bm{Q}} ) \\
		\label{equ:P1_1_CRB}
		\text { s.t. } & \operatorname{tr}( \tilde{\bm{Q}}^{-1}) \leq  \tilde{\Gamma} \\
		\label{equ:P1_1_power}
		&\operatorname{tr}(\tilde{\bm{Q}}) \leq P.
	\end{align}
\end{subequations}
where $\mv{\Sigma}^2 \triangleq \mv{\Sigma}^H \mv{\Sigma} = \operatorname{diag}(\lambda_{1}^2,...,\lambda_{r}^2,0,...,0) \in \mathbb{R}^{M \times M}$. Here, (\ref{equ:P1_1_obj}) follows from \eqref{equ:Obj_P_1} based on the fact that $\det (\bm{I}_{N_c} + \frac{1}{\sigma_c^2} \bm{H}_c \bm{Q} \bm{H}_c^H) = \det (\bm{I}_{N_c} + \frac{1}{\sigma_c^2} \bm{U}_c \mv{\Sigma} \mv{V}_c^H \mv{V}_c \tilde{\bm{Q}} \mv{V}_c^H \mv{V}_c  \mv{\Sigma}^H \bm{U}_c^H)  = \det (\mv{I}_{M} + \frac{1}{\sigma_c^2}  \mv{\Sigma}^2 \tilde{\bm{Q}} )$. Furthermore, (\ref{equ:P1_1_CRB}) and (\ref{equ:P1_1_power}) follow from \eqref{equ:CRLB_P_1}
and	\eqref{equ:Power_P_1}, due to the facts that $\operatorname{tr}({\bm{Q}}^{-1}) = \operatorname{tr}((\mv{V}_c \tilde{\mv{Q}} \mv{V}_c^H)^{-1}) = \operatorname{tr}(\mv{V}_c {\tilde{\mv{Q}}}^{-1} \mv{V}_c^{H}) = \operatorname{tr}(\mv{V}_c^H \mv{V}_c {\tilde{\mv{Q}}}^{-1} ) = \operatorname{tr}({\tilde{\mv{Q}}}^{-1} )$ and $\operatorname{tr}({\bm{Q}}) = \operatorname{tr}( \mv{V}_c \tilde{\mv{Q}} \mv{V}_c^H) = \operatorname{tr}(\mv{V}_c^H \mv{V}_c \tilde{\mv{Q}} ) = \operatorname{tr}(\tilde{\mv{Q}})$, respectively.
Next, we have the following proposition.
\begin{proposition}\label{Pro:diagonal_optimal}
	\emph{The optimal solution to problem (P1.1) is a diagonal matrix with positive diagonal elements, i.e., $\tilde{\bm{Q}} = \operatorname{diag}(p_1,p_2,...,p_M)$, where $p_i > 0, \forall i \in \{1,\ldots,M\}$.}
\end{proposition}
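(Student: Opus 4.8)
The plan is to establish that restricting the optimization in (P1.1) to diagonal $\tilde{\bm{Q}}$ incurs no loss of optimality, after which strict positivity of the diagonal entries is forced by the CRB constraint. As a preliminary, I would note that the feasible set is nonempty---$\tilde{\bm{Q}} = \tfrac{P}{M}\bm{I}_M$ satisfies both \eqref{equ:P1_1_power} and \eqref{equ:P1_1_CRB}, the latter because $\tilde{\Gamma} \ge L\,\text{CRB}_{\text{min}}/(\sigma_s^2 N_s) = M^2/P$---and that \eqref{equ:P1_1_CRB} requires $\operatorname{tr}(\tilde{\bm{Q}}^{-1})$ to be finite, so any feasible $\tilde{\bm{Q}}$ is positive definite and in particular has strictly positive diagonal entries. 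Hence, as soon as a diagonal optimal solution is exhibited, its positivity is automatic, and it only remains to prove diagonality.

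The core step is a ``diagonalization'' argument. I would fix an arbitrary feasible $\tilde{\bm{Q}}$, write $\tilde q_{ii} \triangleq [\tilde{\bm{Q}}]_{ii} > 0$, and set $\bm{D} \triangleq \operatorname{diag}(\tilde q_{11},\dots,\tilde q_{MM})$, claiming that $\bm{D}$ is feasible with objective value at least that of $\tilde{\bm{Q}}$. The power budget is preserved since $\operatorname{tr}(\bm{D}) = \operatorname{tr}(\tilde{\bm{Q}}) \le P$; for the CRB constraint I would invoke the elementary bound $[\tilde{\bm{Q}}^{-1}]_{ii} \ge 1/\tilde q_{ii}$, which follows by applying the Cauchy--Schwarz inequality to the vectors $\tilde{\bm{Q}}^{1/2}\bm{e}_i$ and $\tilde{\bm{Q}}^{-1/2}\bm{e}_i$ (with $\bm{e}_i$ the $i$-th standard basis vector), whence $\operatorname{tr}(\bm{D}^{-1}) = \sum_i 1/\tilde q_{ii} \le \sum_i [\tilde{\bm{Q}}^{-1}]_{ii} = \operatorname{tr}(\tilde{\bm{Q}}^{-1}) \le \tilde{\Gamma}$.

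For the objective I would first symmetrize. Letting $\bm{\Sigma}_h \triangleq \operatorname{diag}(\lambda_1,\dots,\lambda_r,0,\dots,0)$, so that $\bm{\Sigma}^2 = \bm{\Sigma}_h\bm{\Sigma}_h$, Sylvester's determinant identity gives $\det(\bm{I}_M + \tfrac{1}{\sigma_c^2}\bm{\Sigma}^2\tilde{\bm{Q}}) = \det(\bm{I}_M + \tfrac{1}{\sigma_c^2}\bm{\Sigma}_h\tilde{\bm{Q}}\bm{\Sigma}_h)$, and the matrix $\bm{I}_M + \tfrac{1}{\sigma_c^2}\bm{\Sigma}_h\tilde{\bm{Q}}\bm{\Sigma}_h$ is Hermitian positive definite. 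Hadamard's inequality then bounds its determinant by the product of its diagonal entries, namely $\prod_{i=1}^M\bigl(1 + \tfrac{\lambda_i^2}{\sigma_c^2}\tilde q_{ii}\bigr)$, where the factors with $i > r$ equal $1$; this product is exactly $\det(\bm{I}_M + \tfrac{1}{\sigma_c^2}\bm{\Sigma}^2\bm{D})$ because $\bm{\Sigma}^2\bm{D}$ is diagonal. Thus $\bm{D}$ is feasible with objective no smaller than that of $\tilde{\bm{Q}}$. Applying this to an optimal $\tilde{\bm{Q}}$ shows its diagonalization $\bm{D}$ is both feasible and optimal, and by the preliminary step $\bm{D}$ has strictly positive diagonal entries, which is the assertion. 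One may further note that, since equality in Hadamard's inequality holds only for diagonal matrices, $\bm{\Sigma}_h\tilde{\bm{Q}}\bm{\Sigma}_h$ must in fact be diagonal at any optimum.

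The one step I expect to require care is the objective inequality: Hadamard's inequality cannot be applied to $\bm{I}_M + \tfrac{1}{\sigma_c^2}\bm{\Sigma}^2\tilde{\bm{Q}}$ itself, which is not Hermitian in general, so the symmetrization via $\bm{\Sigma}_h$ together with Sylvester's identity is essential; one should also verify that the possibly rank-deficient case $r < M$ causes no difficulty, which it does not, since the vanishing singular values contribute only unit factors to all determinants involved and leave the power and CRB bookkeeping for the corresponding dedicated-sensing directions intact.
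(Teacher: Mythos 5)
Your proof is correct and follows essentially the same route as the paper's: replace an optimal $\tilde{\bm{Q}}$ by its diagonal part, note the power constraint is preserved trivially, and use Hadamard's inequality for the objective together with the diagonal-domination inequality $[\tilde{\bm{Q}}^{-1}]_{ii}\ge 1/[\tilde{\bm{Q}}]_{ii}$ for the CRB constraint. The only differences are matters of rigor rather than strategy: you prove the trace-of-inverse step directly via Cauchy--Schwarz (the paper cites an external lemma), and you symmetrize via Sylvester's identity before invoking Hadamard, whereas the paper applies Hadamard to the non-Hermitian matrix $\bm{I}_M+\tfrac{1}{\sigma_c^2}\bm{\Sigma}^2\tilde{\bm{Q}}$ without comment; both refinements are sound.
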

\begin{proof}
	First, it is evident that $\tilde{\bm{Q}}$ must be positive definite (or $\tilde{\bm{Q}}\succ \bm{0}$) in order for the maximum CRB constraint in \eqref{equ:P1_1_CRB} to hold. Next, suppose that the optimal solution is a positive definite matrix $\tilde{\bm{Q}}^*$ that is not diagonal, and we construct an alternative solution $\tilde{\bm{Q}}^{**} = \tilde{\bm{Q}}^* \circ \mv{I}$, which is a diagonal matrix whose diagonal elements are identical to $\tilde{\bm{Q}}^*$. Then, we have 
	\begin{align}
		\label{eqn:1}
		\det (\mv{I}_{M} + \frac{1}{\sigma_c^2}  \mv{\Sigma}^2 \tilde{\bm{Q}}^* )
		\le & \det (\mv{I}_{M} + \frac{1}{\sigma_c^2}  \mv{\Sigma}^2 \tilde{\bm{Q}}^{**} ),\\
		\operatorname{tr} (\tilde{\bm{Q}}^{**}) = &\operatorname{tr} (\tilde{\bm{Q}}^*) \leq P\label{eqn:2},\\
		\operatorname{tr} \{(\tilde{\bm{Q}}^{**})^{-1}\} \leq &\operatorname{tr} \{ (\tilde{\bm{Q}}^*)^{-1}\} \leq \tilde{\Gamma},\label{eqn:3}
	\end{align}
	where \eqref{eqn:1} follows from the Hadamard inequality \cite{horn2012matrix} and the inequality in \eqref{eqn:3} follows from \cite[Lemma 1]{ohno2004capacity}. By combining \eqref{eqn:1}, \eqref{eqn:2}, and \eqref{eqn:3}, it is clear that $\tilde{\bm{Q}}^{**}$ is also feasible for problem (P1.1) and achieves a no lower objective value than that by $\tilde{\bm{Q}}^{*}$. This contradicts the presumption that the non-diagonal matrix $\tilde{\bm{Q}}^*$ is optimal. This thus verifies that the optimal solution of $\tilde{\bm{Q}}$ to (P1.1) must be diagonal, i.e., $\tilde{\bm{Q}} = \operatorname{diag}(p_1,p_2,...,p_M)$, where $p_i > 0, \forall i \in \{1,...,M\}$.
\end{proof}
Based on Proposition \ref{Pro:diagonal_optimal}, problem (P1.1) is equivalently reformulated as
\begin{subequations}
	\begin{align}\label{equ:P_1P}
		\text{(P1.2):} \text{ } \max _{\{p_i \ge 0\}} & \sum_{i=1}^r \log_2 (1+\frac{\lambda_i^2 p_i}{\sigma_c^2}) \\
		\label{equ:P_1P_CRB}
		\text { s.t. } & \sum_{i=1}^M \frac{1}{p_i} \leq  \tilde{\Gamma} \\
		\label{equ:P_1P_Power}
		& \sum_{i=1}^M p_i \leq P.
	\end{align}
\end{subequations}

Then, we find the optimal solution to (P1.2) as follows.
\begin{proposition}\label{pro:prime_dual_relationship}
	\emph{
		For problem (P1.2), \textcolor{black}{the optimal solution of $\{p_i^{\text{opt}}\}_{i=1}^r$ must satisfy the following conditions:
		\begin{align}
			\label{equ:Lag_zero1_in_pro}
			\frac{1}{\text{ln}2}  (\frac{1}{1+\frac{\lambda_i^2 p_i^{\text{opt}}}{\sigma_c^2}})& \frac{\lambda_i^2}{\sigma_c^2} + (\frac{\mu^{\text{opt}}}{(p_i^{\text{opt}})^2}) - v^{\text{opt}} = 0,  \forall i \in \{1,\ldots, r\},		\end{align}
	or, equivalently,} we have
			\begin{align}
	p_i^{\text{opt}} =& -t_1 + \sqrt[3]{-t_2+\sqrt{t_2^2+t_3^3}} +  \sqrt[3]{-t_2-\sqrt{t_2^2+t_3^3}}, \nonumber \\ &\quad \quad \quad \quad \quad \quad\quad\quad \quad \quad \forall i \in \{1,\ldots, r\},	\label{equ:general_exp_optimal}
			\end{align}
			where
			$$
			t_1 = \frac{b_i}{3a}, t_2 = \frac{27a^2d_i-9ab_ic+2b_i^3}{54a^3}, t_3 = \frac{3ac-b_i^2}{9a^2},
			$$
			with $a = v^{\text{opt}}, b_i = v^{\text{opt}} \frac{\sigma_c^2}{\lambda_i^2} - \frac{1}{\text{ln2}}, c = -\mu^{\text{opt}}$, and $d_i = -\mu^{\text{opt}} \frac{\sigma_c^2}{\lambda_i^2}$. 
			Furthermore, the optimal solution of $\{p_i^{\text{opt}}\}_{i=r+1}^M$ is given by
			\begin{align}
						p_i^{\text{opt}} =& \sqrt{\frac{\mu^{\text{opt}}}{v^{\text{opt}}}}, \quad \forall i \in \{r+1,\ldots, M\}. \label{equ:vanish_exp}
			\end{align}
			Here, $\mu^{\text{opt}}$ and $v^{\text{opt}}$ are the optimal dual variables associated with the constraint in (\ref{equ:P_1P_CRB}) and (\ref{equ:P_1P_Power}), respectively\footnote{The optimal dual solution of $\mu^{\text{opt}}$ and $v^{\text{opt}}$ can be obtained by solving the dual problem of (P1.2) via subgradient-based methods, for which the details can be found in Appendix \ref{Proof:prime_dual_relationship}.}.}\end{proposition}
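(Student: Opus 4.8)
The plan is to apply the Karush--Kuhn--Tucker (KKT) framework to problem (P1.2), which is convex (as (P1) is convex and reformulations preserve convexity) and satisfies Slater's condition, so that the KKT conditions are necessary and sufficient for optimality. First I would write the Lagrangian
\begin{align}
	\mathcal{L}(\{p_i\},\mu,v) = \sum_{i=1}^r \log_2\Bigl(1+\tfrac{\lambda_i^2 p_i}{\sigma_c^2}\Bigr) - \mu\Bigl(\sum_{i=1}^M \tfrac{1}{p_i} - \tilde{\Gamma}\Bigr) - v\Bigl(\sum_{i=1}^M p_i - P\Bigr), \nonumber
\end{align}
with dual variables $\mu \ge 0$ and $v \ge 0$ associated with \eqref{equ:P_1P_CRB} and \eqref{equ:P_1P_Power}, respectively. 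Since the optimal $p_i^{\text{opt}}$ are strictly positive (each $p_i>0$ is forced by the CRB constraint, as already noted in the proof of Proposition~\ref{Pro:diagonal_optimal}), the nonnegativity constraints $p_i\ge 0$ are inactive and carry zero multipliers, so stationarity reads $\partial\mathcal{L}/\partial p_i = 0$ for every $i$. For $i\in\{1,\ldots,r\}$ this yields exactly \eqref{equ:Lag_zero1_in_pro}; for $i\in\{r+1,\ldots,M\}$ the rate term is absent, so stationarity becomes $\mu^{\text{opt}}/(p_i^{\text{opt}})^2 - v^{\text{opt}} = 0$, which rearranges to \eqref{equ:vanish_exp}. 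One should also note $v^{\text{opt}}>0$: if $v^{\text{opt}}=0$ the power constraint would be slack and one could scale $\bm{Q}$ up to strictly increase the rate while still satisfying the CRB constraint, a contradiction; this guarantees the square root in \eqref{equ:vanish_exp} is well defined, and one can similarly argue $\mu^{\text{opt}}>0$ when $\tilde\Gamma < \operatorname{tr}((\bm{Q}_c^*)^{-1})$ so that the CRB constraint is active.

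The remaining work is purely algebraic: to go from the stationarity equation \eqref{equ:Lag_zero1_in_pro} to the closed form \eqref{equ:general_exp_optimal}. Multiplying \eqref{equ:Lag_zero1_in_pro} through by $(p_i^{\text{opt}})^2(1+\lambda_i^2 p_i^{\text{opt}}/\sigma_c^2)$ clears all denominators and produces a cubic polynomial in $p_i^{\text{opt}}$,
\begin{align}
	a\,(p_i^{\text{opt}})^3 + b_i\,(p_i^{\text{opt}})^2 + c\,p_i^{\text{opt}} + d_i = 0, \nonumber
\end{align}
with the coefficients $a=v^{\text{opt}}$, $b_i = v^{\text{opt}}\sigma_c^2/\lambda_i^2 - 1/\ln 2$, $c = -\mu^{\text{opt}}$, $d_i = -\mu^{\text{opt}}\sigma_c^2/\lambda_i^2$ obtained by collecting like powers. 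Applying Cardano's formula for the real root of a cubic (after the depressed-cubic substitution $p_i^{\text{opt}} = y - b_i/(3a)$) gives precisely \eqref{equ:general_exp_optimal} with $t_1,t_2,t_3$ as stated. I would state the coefficient-matching and Cardano steps without grinding through the elementary manipulations, citing a standard reference for the cubic formula.

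The main obstacle, and the only point requiring care beyond routine bookkeeping, is the \emph{selection of the correct root}: a cubic has up to three real roots, and one must verify that the expression in \eqref{equ:general_exp_optimal} picks out the unique root that is both real and positive (and indeed the global maximizer, since KKT is sufficient here). I would argue this by a sign/monotonicity analysis of the left-hand side of \eqref{equ:Lag_zero1_in_pro} as a function of $p_i^{\text{opt}}>0$: the term $\mu^{\text{opt}}/(p_i^{\text{opt}})^2$ is strictly decreasing from $+\infty$ to $0$, the term $\tfrac{1}{\ln 2}\tfrac{\lambda_i^2/\sigma_c^2}{1+\lambda_i^2 p_i^{\text{opt}}/\sigma_c^2}$ is strictly decreasing and bounded, and the constant $-v^{\text{opt}}$ shifts the curve below zero at infinity, so the whole expression is strictly decreasing on $(0,\infty)$ with limit $+\infty$ at $0^+$ and a negative limit at $+\infty$; hence there is exactly one positive root, and the Cardano branch with the real cube roots in \eqref{equ:general_exp_optimal} (valid since the stated discriminant combination yields a real value) must be that root. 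Finally, I would defer the construction of $\mu^{\text{opt}}$ and $v^{\text{opt}}$ themselves — obtained from the dual problem via a subgradient method together with complementary slackness on \eqref{equ:P_1P_CRB} and \eqref{equ:P_1P_Power} — to the referenced Appendix~\ref{Proof:prime_dual_relationship}, which closes the proof.
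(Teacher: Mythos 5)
Your proposal is correct and follows essentially the same route as the paper's own proof: Lagrangian stationarity under strong duality (the paper phrases it as evaluating the dual function and then solving the dual problem via a subgradient/ellipsoid method, which is equivalent to your direct KKT argument), a cubic equation solved by Cardano's formula for $i\le r$, and the closed form $\sqrt{\mu^{\text{opt}}/v^{\text{opt}}}$ for the sensing subchannels. Your added observations --- that $v^{\text{opt}}>0$, $\mu^{\text{opt}}>0$ under an active CRB constraint, and that the stationarity equation has a unique positive root so the stated Cardano branch is the right one --- are sound refinements of details the paper leaves implicit, not a different method.
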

\begin{proof}
As problem (P1.2) is convex and satisfies the slater's condition, the strong duality holds between (P1.2) and its Lagrange dual problem \cite{boyd2004convex}. The optimal solution to (P1.2) can be found by using the Lagrange duality method. Please see Appendix \ref{Proof:prime_dual_relationship} for details.		
\end{proof}

Finally, by combining (\ref{equ:Q_transform}) with Propositions \ref{Pro:diagonal_optimal} and \ref{pro:prime_dual_relationship}, the optimal solution to (P1) is obtained as
\begin{align}\label{equ:Optimal_sol_P1}
	\bm{Q}^{\text{opt}} = \bm{V}_c \tilde{\bm{Q}}^{\text{opt}}\bm{V}_c^H,
\end{align}
where $\tilde{\bm{Q}}^{\text{opt}} = \operatorname{diag}(p_1^{\text{opt}},\ldots, p_M^{\text{opt}})$ with $\{p_i^{\text{opt}}\}$ given in Proposition \ref{pro:prime_dual_relationship}.

\subsection{Optimal Solution Structures}

To gain more insights, this subsection discusses the structure of the optimal transmit covariance solution $\bm{Q}^{\text{opt}}$ in \eqref{equ:Optimal_sol_P1}. In particular, we express $\bm{V}_c$ as $\bm{V}_c = [\bar{\bm{V}}_{c}, \hat{\bm{V}}_c]$, where $\bar{\bm{V}}_{c} \in \mathbb{C}^{M\times r}$ consists of the first $r$ right singular vectors of the communication channel $\bm{H}_c$, and $\hat{\bm{V}}_{c}\in \mathbb{C}^{M\times (M-r)}$ consists of the other $M-r$ ones that span the null space of $\bm{H}_c$. In this case, the optimal transmit covariance solution $\bm{Q}^{\text{opt}}$ in \eqref{equ:Optimal_sol_P1} can be equivalently rewritten as
\begin{align}\label{equ:Optimal_sol_P1:eqv}
	\bm{Q}^{\text{opt}} = \bar{\bm{V}}_c \bar{\bm{Q}}^{\text{opt}}\bar{\bm{V}}_c^H + \hat{\bm{V}}_c \hat{\bm{Q}}^{\text{opt}}\hat{\bm{V}}_c^H,
\end{align}
where $\bar{\bm{Q}}^{\text{opt}} = \operatorname{diag}(p_1^{\text{opt}},\ldots, p_r^{\text{opt}})$ and $\hat{\bm{Q}}^{\text{opt}} = \operatorname{diag}(p_{{r+1}}^{\text{opt}},\ldots, p_M^{\text{opt}})$.

It is interesting to observe from \eqref{equ:Optimal_sol_P1:eqv} that the transmit covariance $\bm{Q}^{\text{opt}}$ is separated into two parts, including  $\bar{\bm{V}}_c \bar{\bm{Q}}^{\text{opt}}\bar{\bm{V}}_c^H$ lying in the range of $\bm{H}_c^H$ for both communication and sensing and $\hat{\bm{V}}_c \hat{\bm{Q}}^{\text{opt}}\hat{\bm{V}}_c^H$ lying in the null space of $\bm{H}_c$ for dedicated sensing only. As the right singular matrix $\bm{V}_c = [\bar{\bm{V}}_{c}, \hat{\bm{V}}_c]$ actually diagonalizes the communication channel $\bm{H}_c$ into $r$ parallel subchannels, it is clear that $\bar{\bm{Q}}^{\text{opt}} = \operatorname{diag}(p_1^{\text{opt}},\ldots, p_r^{\text{opt}})$ corresponds to the optimized power allocation over the $r$ parallel communication subchannels, and $\hat{\bm{Q}}^{\text{opt}} = \operatorname{diag}(p_{{r+1}}^{\text{opt}},\ldots, p_M^{\text{opt}})$ corresponds to that over the other orthogonal $M-r$ dedicated sensing subchannels.

\begin{proposition}\label{lemma:Lemma_order_p}
	\emph{The optimal power allocation satisfies that $p_1^{\text{opt}} \geq ... \geq p_r^{\text{opt}} \geq p_{r+1}^{\text{opt}} = ... = p_{M}^{\text{opt}} > 0 $ .}
\end{proposition}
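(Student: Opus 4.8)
The plan is to read the result directly off the stationarity (KKT) conditions already recorded in Proposition~\ref{pro:prime_dual_relationship}. The starting observation is that any feasible $\{p_i\}$ of (P1.2) must have $p_i>0$ for all $i$, since otherwise $\sum_i 1/p_i=\infty$ violates \eqref{equ:P_1P_CRB}; hence the nonnegativity constraints are all inactive, the stationarity condition \eqref{equ:Lag_zero1_in_pro} holds for every $i\le r$, and the stationarity condition for $i>r$ reduces to $\mu^{\text{opt}}/(p_i^{\text{opt}})^2=v^{\text{opt}}$, which is exactly \eqref{equ:vanish_exp}. This last relation already gives $p_{r+1}^{\text{opt}}=\cdots=p_M^{\text{opt}}=\sqrt{\mu^{\text{opt}}/v^{\text{opt}}}>0$, so the final chain of equalities and the strict positivity of the sensing-only powers come for free.

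Next I would fix the signs of the dual variables. Evaluating \eqref{equ:Lag_zero1_in_pro} at $i=1$, the term $\frac{1}{\ln 2}\,\frac{\lambda_1^2/\sigma_c^2}{1+\lambda_1^2 p_1^{\text{opt}}/\sigma_c^2}$ is strictly positive while $\mu^{\text{opt}}/(p_1^{\text{opt}})^2\ge 0$, which forces $v^{\text{opt}}>0$; and when $M>r$, combining this with $\mu^{\text{opt}}/(p_M^{\text{opt}})^2=v^{\text{opt}}$ forces $\mu^{\text{opt}}>0$ as well. (When $M=r$ there are no dedicated-sensing subchannels, so the parts of the claim involving indices larger than $r$ are vacuous and only the monotonicity argument below is needed.)

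For the monotonicity $p_1^{\text{opt}}\ge\cdots\ge p_r^{\text{opt}}$, I would introduce the auxiliary function
\[
F(p;\lambda^2)\triangleq \frac{1}{\ln 2}\cdot\frac{\lambda^2/\sigma_c^2}{1+\lambda^2 p/\sigma_c^2}+\frac{\mu^{\text{opt}}}{p^2},\qquad p>0,
\]
so that \eqref{equ:Lag_zero1_in_pro} reads $F(p_i^{\text{opt}};\lambda_i^2)=v^{\text{opt}}$ for every $i\le r$. A direct differentiation shows that for each $\lambda>0$ the map $p\mapsto F(p;\lambda^2)$ is strictly decreasing, while for each fixed $p>0$ the map $\lambda^2\mapsto F(p;\lambda^2)$ is strictly increasing, since $\partial F/\partial(\lambda^2)=\frac{1}{\ln 2}\,\frac{1/\sigma_c^2}{(1+\lambda^2 p/\sigma_c^2)^2}>0$. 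Hence for $i<j$, so that $\lambda_i\ge\lambda_j$, the inequality $p_i^{\text{opt}}<p_j^{\text{opt}}$ would imply $v^{\text{opt}}=F(p_i^{\text{opt}};\lambda_i^2)>F(p_j^{\text{opt}};\lambda_i^2)\ge F(p_j^{\text{opt}};\lambda_j^2)=v^{\text{opt}}$, a contradiction, so $p_i^{\text{opt}}\ge p_j^{\text{opt}}$.

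It then remains to show $p_r^{\text{opt}}\ge p_{r+1}^{\text{opt}}$ when $M>r$. Subtracting $\mu^{\text{opt}}/(p_{r+1}^{\text{opt}})^2=v^{\text{opt}}$ from \eqref{equ:Lag_zero1_in_pro} evaluated at $i=r$ yields
\[
\frac{1}{\ln 2}\cdot\frac{\lambda_r^2/\sigma_c^2}{1+\lambda_r^2 p_r^{\text{opt}}/\sigma_c^2}=\mu^{\text{opt}}\left(\frac{1}{(p_{r+1}^{\text{opt}})^2}-\frac{1}{(p_r^{\text{opt}})^2}\right),
\]
whose left-hand side is strictly positive; since $\mu^{\text{opt}}>0$, this forces $(p_r^{\text{opt}})^2>(p_{r+1}^{\text{opt}})^2$, i.e.\ $p_r^{\text{opt}}>p_{r+1}^{\text{opt}}$. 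Assembling the four steps proves the proposition. The only delicate point I anticipate is the dual-variable bookkeeping — confirming $v^{\text{opt}}>0$ and, when $M>r$, $\mu^{\text{opt}}>0$, and carving out the degenerate $M=r$ case — since once these facts are in hand the monotonicity arguments are elementary.
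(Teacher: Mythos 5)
Your proof is correct and follows essentially the same route as the paper: both read the ordering off the stationarity conditions of Proposition~\ref{pro:prime_dual_relationship}, getting $p_{r+1}^{\text{opt}}=\cdots=p_M^{\text{opt}}$ from \eqref{equ:vanish_exp}, $p_r^{\text{opt}}\ge p_{r+1}^{\text{opt}}$ from the sign of the rate term in \eqref{equ:Lag_zero1_in_pro}, and $p_i^{\text{opt}}\ge p_{i+1}^{\text{opt}}$ by comparing the conditions for adjacent indices using $\lambda_i\ge\lambda_{i+1}$ (your auxiliary function $F$ is just a cleaner packaging of the paper's contradiction in \eqref{equ:assump_1}--\eqref{eqn:app:2}). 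Your explicit bookkeeping of $v^{\text{opt}}>0$, $\mu^{\text{opt}}>0$ and the vacuous $M=r$ case is a welcome extra bit of rigor that the paper leaves implicit.
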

	\begin{proof}
See Appendix \ref{Proof:lemma_power_allocation}.		
	\end{proof}
Proposition \ref{lemma:Lemma_order_p} shows that the power allocations over communication subchannels (i.e., $\{p_i^{\text{opt}}\}_{i=1}^r$) are monotonically increasing with respect to the subchannel gains $\{\lambda_i^2\}_{i=1}^r$, which is similar as the conventional water-filling power allocation in \eqref{eq:WF} for rate maximization. By contrast, the power allocations (i.e., $\{p_i^{\text{opt}}\}_{i=r+1}^M$) are constant over dedicated sensing subchannels, similarly as that for CRB minimization (see \eqref{equ:P_CRB}). As a result, the optimal power allocation for ISAC in Proposition \ref{pro:prime_dual_relationship} unifies the above two conventional power allocations for independent communication and sensing, respectively.


Finally, it is also interesting to discuss the optimal power allocation in the special case with $P \to \infty$.
\begin{proposition}\label{pro:P_infinite}
	\emph{When $P \to \infty$, the optimal power allocation for problem (P1.2) is given by
 \begin{align}\label{equ:P_infinite}
		p_i^{\text{opt}} = \begin{cases} \frac{1}{r} (P-\frac{(M-r)^2}{\tilde{\Gamma}}), & 1 \leq i \leq r \\ \frac{M-r}{\tilde{\Gamma}}, & r+1 \leq i \leq M \end{cases},
		\end{align}
in which the transmit power is split into two parts over communication and dedicated sensing subchannels, with equal power allocation within each part.
}
\end{proposition}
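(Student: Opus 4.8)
The plan is to show that, as $P\to\infty$, problem (P1.2) asymptotically \emph{decouples}: the $r$ communication subchannels and the $M-r$ dedicated sensing subchannels interact only through the shared power and CRB budgets, and in the limit the optimal split drives exactly $(M-r)^2/\tilde\Gamma$ of the power to sensing (equally distributed, so as to just meet the CRB constraint with minimum power) and the remaining $P-(M-r)^2/\tilde\Gamma$ to communication (equally distributed, since water-filling flattens in the high-power regime). I would organize the argument in three steps: (i) identify the active constraints and show the communication powers diverge; (ii) carry out the limiting decoupled optimization to read off the claimed allocation; (iii) pin the limit down rigorously by a sandwich bound.

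For step (i): since the objective \eqref{equ:P_1P} is strictly increasing in each $p_i$, $i\le r$, the power constraint \eqref{equ:P_1P_Power} is active at the optimum. When $M>r$, the CRB constraint \eqref{equ:P_1P_CRB} is also active — otherwise shifting a little power from any sensing subchannel (which does not enter the objective) to subchannel $1$ would strictly increase the rate while keeping \eqref{equ:P_1P_CRB} slack, contradicting optimality. Using the ordering $p_1^{\text{opt}}\ge\cdots\ge p_r^{\text{opt}}\ge p_{r+1}^{\text{opt}}=\cdots=p_M^{\text{opt}}>0$ of Proposition~\ref{lemma:Lemma_order_p} one first bounds the sensing powers: combining the active constraint $\sum_{i=1}^r 1/p_i^{\text{opt}}+(M-r)/p_{r+1}^{\text{opt}}=\tilde\Gamma$ with $\sum_{i=1}^r 1/p_i^{\text{opt}}\le r/p_{r+1}^{\text{opt}}$ gives $p_{r+1}^{\text{opt}}\le M/\tilde\Gamma$, so the sensing powers stay bounded. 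Since the optimal value of (P1.2) diverges while it is at most $r\log_2(1+\lambda_1^2 p_1^{\text{opt}}/\sigma_c^2)$, we get $p_1^{\text{opt}}\to\infty$; feeding this into the stationarity condition \eqref{equ:Lag_zero1_in_pro} (with \eqref{equ:vanish_exp}) then forces the power dual variable $v^{\text{opt}}\to 0$, hence $p_i^{\text{opt}}\to\infty$ for every $i\le r$ and $\sum_{i=1}^r 1/p_i^{\text{opt}}\to 0$. The active CRB constraint therefore reduces, in the limit, to $\sum_{i=r+1}^M 1/p_i=\tilde\Gamma$.

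For step (ii): since the objective does not depend on $\{p_i\}_{i=r+1}^M$, these should consume the least total power subject to $\sum_{i=r+1}^M 1/p_i\le\tilde\Gamma$; by Cauchy–Schwarz (AM–HM), $\big(\sum_{i=r+1}^M p_i\big)\big(\sum_{i=r+1}^M 1/p_i\big)\ge(M-r)^2$, so the minimum sensing power is $(M-r)^2/\tilde\Gamma$, attained by the equal allocation $p_i=(M-r)/\tilde\Gamma$. The residual power $P-(M-r)^2/\tilde\Gamma$ is then allocated over the $r$ communication subchannels by classical water-filling \eqref{eq:WF}, $p_i^{\text{opt}}=\nu-\sigma_c^2/\lambda_i^2$ with $\sum_{i=1}^r p_i^{\text{opt}}=P-(M-r)^2/\tilde\Gamma$; as this budget grows, the fixed offsets $\sigma_c^2/\lambda_i^2$ become negligible relative to $\nu\to\infty$, giving $p_i^{\text{opt}}=\tfrac{1}{r}\big(P-(M-r)^2/\tilde\Gamma\big)+O(1)$, which is the asserted equal split to leading order. (When $r=M$ the sensing term vanishes, \eqref{equ:P_1P_CRB} is inactive for large $P$, and the same water-filling argument gives $p_i^{\text{opt}}=P/M+O(1)$, consistent with \eqref{equ:P_infinite}.)

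For step (iii) I would close the argument with a sandwich. Upper bound: every feasible point of (P1.2) has $\sum_{i=r+1}^M 1/p_i\le\tilde\Gamma$, hence $\sum_{i=r+1}^M p_i\ge(M-r)^2/\tilde\Gamma$ by Cauchy–Schwarz, so $\sum_{i=1}^r p_i\le P-(M-r)^2/\tilde\Gamma$; dropping the leftover CRB budget on the communication subchannels bounds the optimal value of (P1.2) above by the MIMO water-filling rate with power budget $P-(M-r)^2/\tilde\Gamma$. Lower bound: the feasible point that assigns $p_i=(M-r)/(\tilde\Gamma-\epsilon_P)$ to the sensing subchannels and water-fills the residual power over the communication subchannels, with $\epsilon_P\to 0$ slowly enough (e.g.\ $\epsilon_P=P^{-1/2}$) that the vanishing reciprocals $\sum_{i=1}^r 1/p_i=O(1/P)$ still fit inside $\epsilon_P$, achieves a rate converging to the same water-filling value by continuity of that rate in its power budget. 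The two bounds coincide in the limit, forcing the optimal power split and the equal within-group allocations in \eqref{equ:P_infinite}. The main obstacle is exactly this last step: the allocation in \eqref{equ:P_infinite} is only \emph{asymptotically} feasible — it overshoots the CRB bound by $r^2/(P-(M-r)^2/\tilde\Gamma)=O(1/P)$ — so it cannot be substituted directly, and the near-feasible-point construction together with the limiting/continuity arguments (and the divergence $p_i^{\text{opt}}\to\infty$ established in step (i)) are where the care is needed.
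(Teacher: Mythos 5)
Your proposal is correct and follows essentially the same sandwich argument as the paper: upper-bound (P1.2) by the rate-maximization problem with reduced budget $P-\frac{(M-r)^2}{\tilde{\Gamma}}$ (the paper gets this by setting the equal sensing powers from Propositions \ref{pro:prime_dual_relationship}--\ref{lemma:Lemma_order_p} to their minimum $\frac{M-r}{\tilde{\Gamma}}$ and dropping \eqref{equ:P_1P_CRB}, you get the identical bound via AM--HM), and then match it with an allocation that is feasible in the limit, treating $r=M$ separately via high-power water-filling. Your $\epsilon_P$-perturbed strictly feasible construction and the KKT-based asymptotics in step (i) are in fact more careful than the paper's achievability step, which simply asserts that the nominal allocation \eqref{equ:P_infinite} becomes feasible as $P\to\infty$ despite its $O(1/P)$ overshoot of the CRB constraint at any finite $P$.
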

\begin{proof}
See Appendix \ref{Proof:Pro_P_infinite}.	
\end{proof}

\section{Numerical Results} \label{Section_results}

This section presents numerical results to validate the C-R region performance of the presented optimal transmit covariance, as compared to the following benchmark schemes.
\begin{itemize}
	\item {\bf Time switching}: The BS time switches the two transmit covariances $\bm{Q}_c^*$ and $\bm{Q}_s^*$ for rate maximization and CRB minimization, respectively. This design is only applicable when $\bm{Q}_c^*$ is of full rank, since otherwise $\text{CRB}_C = \text{CRB}(\bm{Q}_c^*) \rightarrow \infty$ follows (see Remark \ref{remark:finite_SCRB}).
	\item {\bf Power splitting with equal power allocation (EP)}: Similarly as in \eqref{equ:Optimal_sol_P1}, the BS sets the transmit covariance as $\bm{Q}^{\text{EP}} = \bm{V}_c \tilde{\bm{Q}}^{\text{EP}}\bm{V}_c^H$, in which $\tilde{\bm{Q}}^{\text{EP}} = \operatorname{diag}(p_1^{\text{EP}},\ldots, p_M^{\text{EP}})$ denotes the power allocation. The BS splits the transmit power $P$ into two parts, $\beta P$ for the $r$ communication subchannels and $(1-\beta)P$ for the $M-r$ sensing subchannels, with $0\le \beta\le 1$ denoting the power splitting factor that is a parameter to be optimized. Following the equal power allocation, we have $p_1^{\text{EP}} = \ldots = p_r^{\text{EP}} =  \frac{\beta P}{r}$ and $p_{r+1}^{\text{EP}} = \ldots = p_M^{\text{EP}} =  \frac{(1-\beta) P}{M-r}$. Notice that if $r = M$, then we set $\beta = 1$.
	\item {\bf Power splitting with strongest eigenmode transmission (SEM)}: The BS sets $\bm{Q}^{\text{SEM}} = \bm{V}_c \tilde{\bm{Q}}^{\text{SEM}}\bm{V}_c^H$, in which $\tilde{\bm{Q}}^{\text{SEM}} = \operatorname{diag}(p_1^{\text{SEM}},\ldots, p_M^{\text{SEM}})$. The BS splits the transmit power $P$ into two parts, $\beta P$ for the the dominant communication subchannel and $(1-\beta)P$ for the remaining $M-1$ subchannels, with $0\le \beta\le 1$ to be optimized. In this case, we have $p_1^{\text{SEM}} = \beta P$ and $p_{2}^{\text{SEM}} = \ldots = p_M^{\text{EP}} =  \frac{(1-\beta) P}{M-1}$.
%
%
\end{itemize}


In the simulation, the BS-Tx, the BS-Rx, and the CU are each equipped with a uniform linear array (ULA) with half wavelength spacing between consecutive antennas. We consider Rician fading for the communication channel, i.e., $\bm{H}_c = \sqrt{\frac{K_c}{K_c+1}} \bm{H}_c^{\text{los}} + \sqrt{\frac{1}{K_c+1}} \bm{H}_c^w$, where $\bm{H}_c^w$ is a CSCG random matrix with zero mean and unit variance, and $\bm{H}_c^{\text{los}} = \bm{a}_r^c(\theta_r) {\bm{a}_t^c}^T(\theta_t)$. Here, $\bm{a}_r^c(\theta_r^c)$ and $\bm{a}_t^c(\theta_t^c)$ denote the steering vectors at the CU receiver and the BS-Tx, and $\theta_r^c = \theta_t^c = \frac{\pi}{6}$ denote the AoA at the CU and the AoD at the BS-Tx, respectively. Furthermore, the noise power $\sigma_c^2$ at the CU and $\sigma_{s}^2$ at the BS-Rx are both normalized to be unity, the length of symbols in CPI is $L = 200$, and the number of antennas at the BS-Rx is $N_s = 12$.
%
%
%
%

First, we consider the scenario where the number of transmit antennas at the BS-Tx is $M = 8$, the number of antennas at CU is $N_c = 6$, the Rician factor is $K_c=100$ and the power budget at the BS-Tx is $P = 800$ ($29.3$ dB). In this case, we have $r < M$, such that $\bm{Q}^*_c$ is rank-deficient and $\text{CRB}_C \rightarrow \infty$. Fig. \ref{fig:region_tlessthanM} shows the resultant C-R regions achieved by the optimal design and other benchmark schemes. It is observed that the C-R-region boundary by the optimal design outperforms those by the power splitting designs with equal power allocation and strongest eigenmode transmission. It is also observed that when the CRB is low, the three designs achieve similar C-R-region boundaries. Furthermore, as $\Gamma$ increases, the C-R-region boundary by the optimal design is observed to approach the capacity without sensing (i.e., $R_{\text{max}}$). This is consistent with the result in Remark \ref{remark:finite_SCRB}.



\begin{figure}[htb]
	\centering
	\setlength{\abovecaptionskip}{-1mm}
	\setlength{\belowcaptionskip}{-4mm}
	\includegraphics[width=2.3in]{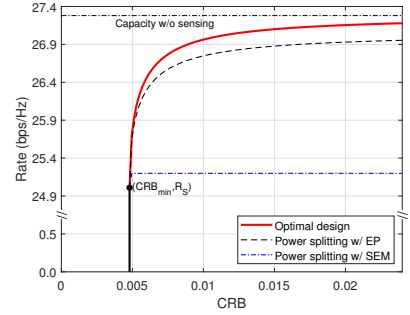}
	\caption{The C-R region in the case with $M = 8$ and $r = N_c = 6$. } \label{fig:region_tlessthanM}
\end{figure}

\begin{figure}[htb]
	\centering
	\setlength{\abovecaptionskip}{-1mm}
	\setlength{\belowcaptionskip}{-4mm}
	\includegraphics[width=2.3in]{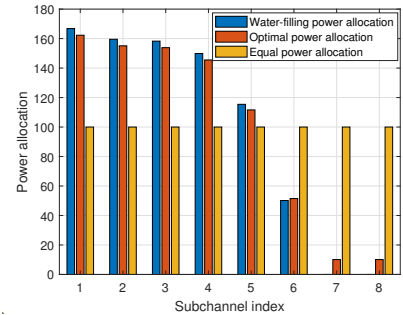}
	\caption{The power allocation in the case with $M = 8$ and $r = N_c = 6$, and $\Gamma = 0.0152$.} \label{fig:power_allocation_deficient}
\end{figure}

\begin{figure}[htb]
	\centering
	\setlength{\abovecaptionskip}{-1mm}
	\setlength{\belowcaptionskip}{-4mm}
	\includegraphics[width=2.3in]{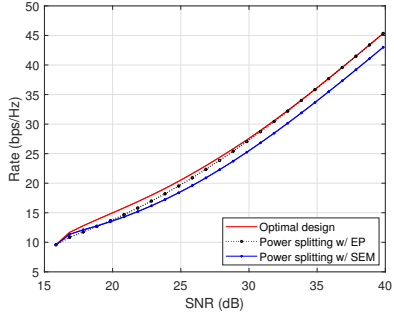}
	\caption{The rate versus the SNR in dB with $M = 8$ and $r = N_c = 6$, and $\Gamma = 0.1$. } \label{fig:rate_SNR} 
\end{figure}

\begin{figure}[htb]
	\centering
	\setlength{\abovecaptionskip}{-1mm}
	\setlength{\belowcaptionskip}{-4mm}
	\includegraphics[width=2.3in]{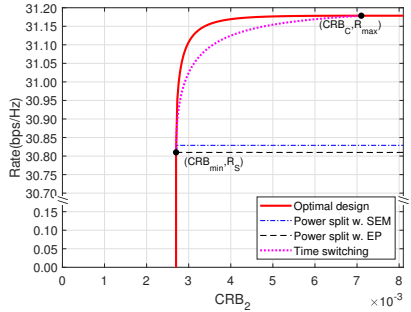}
	\caption{The C-R region in the case with $r = M = N_c = 6$.} \label{fig:C-R-region}
\end{figure}


Fig. \ref{fig:power_allocation_deficient} shows the optimal power allocation in the case with $M = 8$, $N_c = 6$, and CRB threshold $\Gamma = 0.0152$, as compared to the water-filling and equal power allocations for rate maximization and CRB minimization, respectively. It is observed that the proposed optimal power allocations over the first six communication subchannels are monotonically non-increasing, which are higher than the constant power allocated to the next two sensing subchannels. This is consistent with Proposition \ref{lemma:Lemma_order_p}. It is also observed that the proposed optimal power allocations over the first five communication subchannels are lower than the corresponding water-filling power allocations, as more power should be allocated to other subchannels for facilitating the sensing. By contrast, the proposed optimal power allocations over the last three subchannels are higher than the corresponding water-filling power allocations, in order to meet the sensing requirement.


Fig. \ref{fig:rate_SNR} shows the rate versus the signal-to-noise ratio (SNR) (or equivalently the transmit power $P$) in the case with $M = 8$, $N_c = 6$, and  $\Gamma = 0.1$. It is observed that the optimal design performs best over the whole SNR regime. In the high SNR regime, the rate achieved by the power splitting with equal power allocation is observed to approach that by the optimal design. This can be explained based on Proposition \ref{pro:P_infinite}. In the low SNR, the power splitting with strongest eigenmode transmission is observed to approach the optimal design. 



Next, we consider another scenario where $M = N_c = 6$, $K_c=20$, and $P = 800$. In this case, we have $r = 6$, and $\bm{Q}_c^*$ is of full rank (as $P > P_0$ in  Remark \ref{remark:finite_SCRB}), such that $\text{CRB}_C$ is finite. Fig. \ref{fig:C-R-region} shows the resultant C-R regions. It is observed that the boundary point $(\text{CRB}_C, R_{\text{max}})$ exists and the C-R-region boundary achieved by the optimal design outperforms other benchmark schemes. The C-R-region boundary by time switching is observed to outperform the other two power splitting designs when the CRB value becomes large.

\section{Conclusion}
This paper investigated the fundamental performance tradeoff between the estimation CRB and the communication data rate in a point-to-point MIMO ISAC system with an extended radar target. We characterized the complete Pareto boundary of the resultant C-R region, by proposing the semi-closed-form optimal transmit covariance solution to a new CRB-constrained rate maximization problem. Numerical results were provided to show the C-R-region boundary achieved by the optimal design as compared to other benchmark schemes. We hope that this paper can provide insights on revealing the fundamental limits of MIMO ISAC. 


	\bibliographystyle{ieeetran}
	
	\bibliography{refsv2}

\appendix


\subsection{Proof of Proposition 2}\label{Proof:prime_dual_relationship}

We prove this proposition via the Lagrange duality method. Let $\mu \geq 0$ and $v \geq 0$ denote the dual variables associated with constraints (\ref{equ:P_1P_CRB}) and (\ref{equ:P_1P_Power}), respectively. By denoting $\bm{p} \triangleq \left[p_1,...,p_M\right]^T$, the partial Lagrangian of (P1.2) is expressed as
\begin{align}
	\nonumber
	\mathcal{L}(\bm{p},\mu,v)  = \sum_{i=1}^r \log_2 (1+\frac{\lambda_i^2 p_i}{\sigma_c^2}) - & \mu (\sum_{i=1}^M \frac{1}{p_i} - \tilde{\Gamma}) - \\
	v(\sum_{i=1}^M p_i - P),
\end{align}
and the corresponding dual function is given by
\begin{align}\label{equ:dual_function}
	g(\mu,v) = \max_{\bm{p} \geq \bm{0}} \mathcal{L}(\bm{p},\mu,v).
\end{align}
Accordingly, the dual problem of (P1.2) is given by
\begin{align}\label{P_1_dual}
	\text{(D1.2)}: \min _{\mu \geq 0, v \geq 0} g(\mu,v).
\end{align}
Since problem (P1.2) is convex and satisfies the Slater's condition, the strong duality holds between problem (P1.2) and its dual problem (D1.2) \cite{boyd2004convex}. As a result, we can optimally solve problem (P1.2) by equivalently solving the dual problem (D1.2). In the following, we first solve problem (\ref{equ:dual_function}) to obtain the dual function $g(\mu,v)$ and then solve (D1.2) to obtain the optimal dual solution $\mu^{\text{opt}}$ and $v^{\text{opt}}$. 

First, consider problem (\ref{equ:dual_function}) with given $\mu \geq 0$ and $v \geq 0$, and suppose that its optimal solution is given by $\bm{p}^*$.  By setting the partial derivatives of $\mathcal{L}(\bm{p},\mu,v)$ with respect to $p_i$'s to be zero, we have
\begin{align}\label{equ:Lag_zero1}
	\frac{\partial \mathcal{L}}{\partial p_i^*} = \frac{1}{\text{ln}2} (\frac{1}{1+\frac{\lambda_i^2 p_i^*}{\sigma_c^2}}) \frac{\lambda_i^2}{\sigma_c^2} + (\frac{\mu}{(p_i^*)^2}) - v = 0, 
	\forall i \in \{1,..., r\},
\end{align}
and
\begin{align}\label{equ:Lag_zero2}
	\frac{\partial \mathcal{L}}{\partial p_i^*} = \frac{\mu}{p_i^{*2}} - v = 0, \forall i \in \{r+1,..., M\}.
\end{align}
Based on (\ref{equ:Lag_zero1}), (\ref{equ:Lag_zero2}), and Cardano's formula for solving a cubic equation, we have the optimal solution to problem (\ref{equ:dual_function}) as
			\begin{align}
	\nonumber
	p_i^{*} =& -t_1 + \sqrt[3]{-t_2+\sqrt{t_2^2+t_3^3}} +  \sqrt[3]{-t_2-\sqrt{t_2^2+t_3^3}}\\
	& \quad \quad\quad\quad\quad \quad \quad \quad\quad\quad\quad \quad \forall i \in \{1,\ldots, r\}, \label{equ:general_exp_normal} \\
	p_i^{*} =& \sqrt{\frac{\mu}{v}}, \quad \forall  i \in \{r+1,\ldots, M\}, \label{equ:vanish_exp_normal}
\end{align}
where
$$
t_1 = \frac{b_i}{3a}, t_2 = \frac{27a^2d_i-9ab_ic+2b_i^3}{54a^3}, t_3 = \frac{3ac-b_i^2}{9a^2},
$$
with $a = v, b_i = v \frac{\sigma_c^2}{\lambda_i^2} - \frac{1}{\text{ln2}}, c = -\mu$, and $d_i = -\mu \frac{\sigma_c^2}{\lambda_i^2}$.

Next, we solve the dual problem (D1.2) to find the optimal dual solution $(\mu^{\text{opt}},v^{\text{opt}})$. Notice that the dual problem (D1.2) is always convex but non-differentiable in general. As a result, we can use subgradient-based methods such as the ellipsoid method \cite{boyd2004convex} to find the optimal solution. Towards this end, we use the fact that the subgradient of $g(\mu,v)$ is given by
\begin{align} \label{equ:sub_gradient_exp}
	\partial g |_{(\mu,v)} = [ -(\sum_{i=1}^{M} \frac{1}{p_i^*} - \Gamma_t ), -(\sum_{i=1}^{M} p_i^* - P)]^T.
\end{align}
Therefore, the optimal dual solution $(\mu^{\text{opt}},v^{\text{opt}})$ to (D1.2) can be obtained. 

Finally, by substituting $(\mu^{\text{opt}},v^{\text{opt}})$ into the formulas in \eqref{equ:Lag_zero2}, (\ref{equ:general_exp_normal}), and (\ref{equ:vanish_exp_normal}), the optimal solution to (P1.2) is given in \eqref{equ:Lag_zero1_in_pro},  (\ref{equ:general_exp_optimal}), and (\ref{equ:vanish_exp}). This thus completes the proof.

\subsection{Proof of Proposition 3}\label{Proof:lemma_power_allocation}
Based on (\ref{equ:vanish_exp}), it is evident that $p_{r+1}^{\text{opt}} = ... = p_{M}^{\text{opt}}>0$. Therefore, to verify Proposition \ref{lemma:Lemma_order_p}, we only need to prove that $p_1^{\text{opt}} \ge p_2^{\text{opt}} \ge ...\ge p_r^{\text{opt}} \ge p_{r+1}^{\text{opt}}$. 

First, we prove $p_r^{\text{opt}} \geq p_{r+1}^{\text{opt}}$ via contradiction. Suppose that $p_r^{\text{opt}} < p_{r+1}^{\text{opt}} = \sqrt{\frac{\mu^{\text{opt}}}{v^{\text{opt}}}}$. Then we have
\begin{align}
	0 \stackrel{(a)}{\leq} \frac{1}{\text{ln}2} (\frac{1}{1+\frac{\lambda_r^2 p_r^{\text{opt}}}{\sigma_c^2}}) \frac{\lambda_r^2}{\sigma_c^2} \stackrel{(b)}{=} \mu^{\text{opt}}(-\frac{1}{(p_r^{\text{opt}})^2}) + v^{\text{opt}} \stackrel{(c)}{<} 0,
\end{align}
where (a) follows from $p_r^{\text{opt}} > 0$, (b) is obtained based on (\ref{equ:Lag_zero1_in_pro}), and (c) holds based on 
the above presumption. This incurs a contradiction. Thus, we have $p_r^{\text{opt}} \geq p_{r+1}^{\text{opt}}$.

Next, we prove $p_i^{\text{opt}} \ge p_{i+1}^{\text{opt}}$, for any $i\in \{1,..., r-1\}$, by contradiction. Assume that  $p_i^{\text{opt}} < p_{i+1}^{\text{opt}}$. Then, based on (\ref{equ:Lag_zero1_in_pro}), we have
\begin{align}\label{equ:assump_1}
	\nonumber
	\frac{1}{\text{ln}2} (\frac{1}{\frac{1}{\lambda_i^2}+\frac{ p_i^{\text{opt}}}{\sigma_c^2}}) \frac{1}{\sigma_c^2}  & = v^{\text{opt}} - \mu^{\text{opt}} \frac{1}{(p_i^{\text{opt}})^2} < v^{\text{opt}} - \mu^{\text{opt}} \frac{1}{(p_{i+1}^{\text{opt}})^2} \\
	& = \frac{1}{\text{ln}2} (\frac{1}{\frac{1}{\lambda_{i+1}^2}+\frac{ p_{i+1}^{\text{opt}}}{\sigma_c^2}}) \frac{1}{\sigma_c^2}.
\end{align}
Furthermore, based on the presumption $\frac{p_{i}^{\text{opt}}}{\sigma_c^2} < \frac{p_{i+1}^{\text{opt}}}{\sigma_c^2}$ together with the fact that $\frac{1}{\lambda_{i}^2} \leq \frac{1}{\lambda_{i+1}^2}$, we have
\begin{align}
	\frac{1}{\lambda_i^2}+\frac{p_i^{\text{opt}}}{\sigma_c^2} < \frac{1}{\lambda_{i+1}^2}+\frac{p_{i+1}^{\text{opt}}}{\sigma_c^2},
\end{align}
which further yields
\begin{align}\label{eqn:app:2}
	1/(\frac{1}{\lambda_i^2}+\frac{p_i^{\text{opt}}}{\sigma_c^2}) > 1/(\frac{1}{\lambda_{i+1}^2}+\frac{p_{i+1}^{\text{opt}}}{\sigma_c^2}).
\end{align}
It is clear that \eqref{eqn:app:2} contradicts (\ref{equ:assump_1}). Thus, we must have $p_i^{\text{opt}} \geq p_{i+1}^{\text{opt}}$. Combining the above results finishes the proof.

\subsection{Proof of Proposition \ref{pro:P_infinite}} \label{Proof:Pro_P_infinite}

First, we consider that $r = M$. In this case, when $P \rightarrow \infty$, the optimal water-filling power allocation that maximizes the sum rate in \eqref{equ:P_1P} subject to the sum power constraint in \eqref{equ:P_1P_Power} reduces to the equal power allocation $p_i =P/M, \forall i\in \{1,\ldots,M\}$. Furthermore, such power allocation is shown to minimize the estimation CRB $\sum_{i=1}^M \frac{1}{p_i}$ in constraint \eqref{equ:P_1P_CRB}. Therefore, it follows that $p_i^{\text{opt}} =P/M, \forall i\in \{1,\ldots,M\}$. 

Next, we consider that $r<M$. Based on Propositions \ref{pro:prime_dual_relationship} and \ref{lemma:Lemma_order_p}, we have $p_{r+1}^{\text{opt}} = \ldots = p_{M}^{\text{opt}}$ in this case. Therefore, without loss of optimality, we use $p_s = p_{r+1} = \ldots = p_{M}$ to denote the transmit power allocated to sensing subchannels. Accordingly, problem (P1.2) is equivalently reformulated as
\begin{subequations}\label{equ:P_12_eq}
	\begin{align}
	\text{ } \max _{\{p_i \ge 0\}_{i=1}^r, p_s \ge 0} & \sum_{i=1}^r \log_2 (1+\frac{\lambda_i^2 p_i}{\sigma_c^2}) \\
		\label{equ:P_1P_CRB_eq}
		\text { s.t. } & \sum_{i=1}^r \frac{1}{p_i} + \frac{M-r}{p_s} \leq  \tilde{\Gamma} \\
		\label{equ:P_1P_Power_eq}
		& \sum_{i=1}^r p_i + (M-r) p_s \leq P.
	\end{align}
\end{subequations}
It follows from \eqref{equ:P_1P_CRB_eq} that $p_s \ge \frac{M-r}{\tilde{\Gamma}}.$
By setting
\begin{align}
\label{eq:ps} 
p_s = \frac{M-r}{\tilde{\Gamma}} 
\end{align}
 and dropping the estimation CRB constraint in \eqref{equ:P_1P_CRB_eq}, problem \eqref{equ:P_12_eq} is reduced to the following rate maximization problem:  
\begin{subequations}\label{equ:P_12_reduced}
	\begin{align}
		 \text{ } \max _{\{p_i \ge 0\}_{i=1}^r} & \sum_{i=1}^r \log_2 (1+\frac{\lambda_i^2 p_i}{\sigma_c^2}) \\ 
		\label{equ:P_1P_Power_eq_reduced}
	\text { s.t. } 	& \sum_{i=1}^r p_i \leq P - \frac{(M-r)^2}{\tilde{\Gamma}},
	\end{align}
\end{subequations}
for which the optimal value serves as an upper bound of that by \eqref{equ:P_12_eq}. As $P - \frac{(M-r)^2}{\tilde{\Gamma}} \rightarrow \infty$, it is clear that the equal power allocation, given by
\begin{align}\label{eqn:pr}
p_i = \frac{1}{r} (P-\frac{(M-r)^2}{\tilde{\Gamma}}), \forall i\in\{1,\ldots, r\},
\end{align}
 is optimal for problem \eqref{equ:P_12_reduced}. With $P\rightarrow \infty$, it can be shown that $\{p_i\}$ given in \eqref{eq:ps} and \eqref{eqn:pr} is feasible for problem (\ref{equ:P_12_eq}) and achieves the same value as the optimal value of problem \eqref{equ:P_12_reduced}. As a result, such power allocation is optimal for (\ref{equ:P_12_eq}) and thus (P1.2). This thus completes the proof.





\end{document}